\documentclass[letterpaper, 10 pt, conference]{ieeeconf}
\IEEEoverridecommandlockouts
\overrideIEEEmargins

\usepackage{amsmath,amssymb}
\usepackage{mathtools}
\usepackage{bm}

\usepackage{amsthm} 
\usepackage{subfig}

\usepackage{xcolor}

\newtheorem{definition}{Definition}

\newtheorem{proposition}{Proposition}
\newtheorem{theorem}{Theorem}
\newtheorem{remark}{Remark}
\newtheorem{assumption}{Assumption}



\title{\LARGE \bf
Connectivity-Preserving Multi-Agent Area Coverage via Optimal-Transport-Based Density-Driven Optimal Control (D²OC)
}

\author{Kooktae Lee$^{1}$ and Ethan Brook$^{1}$
\thanks{This work was supported by NSF CAREER Grant CMMI-DCSD-2145810. $^{1}$Kooktae Lee and Ethan Brook are with the Department of Mechanical Engineering, New Mexico Institute of Mining and Technology, Socorro, NM 87801, USA, email: kooktae.lee@nmt.edu, ethan.brook@student.nmt.edu.}  
}

\begin{document}

\maketitle
\thispagestyle{empty}
\pagestyle{empty}

\begin{abstract}
Multi-agent systems are widely used for area coverage tasks in applications such as search-and-rescue, environmental monitoring, and precision agriculture. Achieving \emph{non-uniform coverage}, where certain regions are prioritized, requires coordinating agents while accounting for dynamic and communication constraints. Existing density-driven methods effectively distribute agents according to a reference density but typically do not guarantee connectivity, which can lead to disconnected agents and degraded coverage in practical deployments. This letter presents a connectivity-preserving approach within the Density-Driven Optimal Control (D$^2$OC) framework. The coverage problem, expressed via the Wasserstein distance between agent distributions and a reference density, is formulated as a quadratic program. Communication constraints are incorporated through a smooth penalty function, ensuring strict convexity and global optimality while naturally maintaining inter-agent connectivity without rigid formations. Simulation results demonstrate that the proposed method effectively keeps agents within communication range, improving coverage quality and convergence speed compared to methods without explicit connectivity enforcement.
\end{abstract}

\section{Introduction}
Multi-agent systems have increasingly attracted attention for area coverage tasks, including search-and-rescue, environmental monitoring, precision agriculture, and infrastructure inspection \cite{oh2015survey,cortes2004coverage,mathew2011metrics}. Coordinated agents can explore large or complex environments more efficiently than a single agent by distributing sensing and actuation, reducing mission time and energy consumption. In many applications, some regions require greater attention due to environmental significance, hazard likelihood, or mission priorities, necessitating \emph{non-uniform coverage} strategies \cite{lee2022density}.

\textbf{Literature Survey:}  
Various approaches have been proposed for non-uniform coverage. Classical density-driven methods, such as Heat Equation Driven Area Coverage (HEDAC) \cite{ivic2023multi} and Density-Driven Control (D$^2$C) \cite{lee2022density}, operate in a decentralized manner: each agent computes its motion locally based on a reference density. In contrast, Spectral Multiscale Coverage (SMC) \cite{mathew2009spectral} is centralized, requiring global knowledge of all agents and the reference distribution. While decentralization provides flexibility and adaptability, agents must still communicate to maintain coordinated behavior. Existing methods generally do not explicitly enforce connectivity, which can lead to disconnected agents and degraded coverage.

\textbf{Connectivity Preservation:}  
Maintaining inter-agent communication is critical for decentralized coordination. Conventional strategies rely on fixed formations or rigid inter-agent distance constraints \cite{oh2015survey, zhao2021formation, afrazi2025density}, which reduce flexibility in cluttered or dynamic environments. Opportunistic communication \cite{mox2024opportunistic} allows agents to exchange information as opportunities arise but can slow convergence and cause uneven coverage.  
Recent connectivity-preserving MPC methods \cite{carron2023multi, kawajiri2021coverage} maintain communication via algebraic-connectivity or distance-based penalties, but typically require centralized or iterative computation and do not address non-uniform coverage.  
In this study, a connectivity-preserving mechanism is embedded \emph{within} a convex density-driven optimal control framework, ensuring scalability, decentralization, and robust communication.

\textbf{Density-Driven Optimal Control (D$^2$OC):}  
This work builds on the D$^2$OC framework~\cite{seo2025density}, where a reference density defines spatial priorities and the Wasserstein distance measures how well the agent distribution aligns with this reference. Control inputs steer agents to reduce this discrepancy while respecting dynamics and motion constraints. Unlike~\cite{seo2025density}, which derived an optimal-control solution using Lagrange multipliers, the present work reformulates the problem to enable a quadratic-program (QP) representation and further incorporates connectivity-preserving constraints to maintain inter-agent communication during coverage.

\textbf{Contribution:}  
This work enhances D$^2$OC by enforcing connectivity during non-uniform coverage. The main contributions are summarized as follows:  
\textbf{1) QP Equivalence and Convexity Analysis:} The Wasserstein-based D$^2$OC objective, \textit{without connectivity constraints}, is shown to be equivalent to a QP, and its strict convexity and associated optimal solution, including the closed-form unconstrained optimizer, are established. These results were not presented in~\cite{seo2025density};  
\textbf{2) Connectivity-Preserving Penalty with Reachable Sets:} A smooth connectivity penalty integrated with a reachable-set formulation is developed to maintain inter-agent communication without enforcing rigid formations. When this penalty is included, the overall formulation remains convex but is no longer quadratic, an aspect absent from~\cite{seo2025density};  
\textbf{3) Simulation-Based Validation:} Representative simulations confirm that the proposed connectivity-preserving D$^2$OC achieves improved coverage efficiency, faster convergence, and sustained communication compared with the unconstrained case.

\section{Problem Setup}
\noindent \textbf{Notation:} 
The sets of real and integer numbers are denoted by \(\mathbb{R}\) and \(\mathbb{Z}\), respectively. The sets \(\mathbb{Z}_{>0}\) and \(\mathbb{Z}_{\geq 0} := \mathbb{Z}_{>0} \cup \{0\}\) denote positive and non-negative integers.  
The space \(\mathbb{R}^n\) represents \(n\)-dimensional column vectors. 
The Euclidean and infinity norms are denoted by $\|\cdot\|_2$ (simply $\|\cdot\|$ when obvious) and $\|\cdot\|_\infty$, respectively,
and the transpose of a matrix $A$ by $A^\top$.
The zero matrix $\mathbf{0}_{m\times n} \in \mathbb{R}^{m\times n}$ and the identity matrix $\mathbf{I}_n \in \mathbb{R}^{n\times n}$ are denoted with their sizes as subscripts. 
A weighted norm is defined as \(\|U\|_R := \sqrt{U^\top R U}\), where \(R \succ 0\).  
The operator \(\mathrm{diag}([\cdot])\) constructs a block diagonal matrix from its arguments. 
The operator \(\mathrm{blkdiag}(\cdot)\) denotes a block-diagonal concatenation, i.e., \(\mathrm{blkdiag}(A_h)_{h=r}^{r+H-1}\) places each block \(A_h\) on the diagonal.  
The Hadamard (elementwise) product is denoted by \(\odot\), and 
\(\oplus\) denotes the Minkowski sum, i.e., 
\(A \oplus B = \{a+b \mid a\in A,\, b\in B\}\).

We study a network of agents, each described by discrete-time linear dynamics. For agent \(i\) in the multi-agent system, the evolution over time index \(k \in \mathbb{Z}_{\ge 0}\) is modeled by the Linear Time-Invariant (LTI) system as
\begin{equation}
\begin{aligned}
x_i(k+1) &= A_i x_i(k) + B_i u_i(k), \quad
y_i(k) = C_i x_i(k),
\end{aligned}\label{eq:dyn}
\end{equation}
where \(x_i(k) \in \mathbb{R}^n\) denotes the state vector, \(u_i(k) \in \mathbb{R}^m\) the control action, and \(y_i(k) \in \mathbb{R}^d\) the measured output. The system, input, and output matrices are given by \(A_i \in \mathbb{R}^{n \times n}\), \(B_i \in \mathbb{R}^{n \times m}\), and \(C_i \in \mathbb{R}^{d \times n}\). 

To guarantee that the reachable sets we later use remain meaningful and well-behaved, the following standard assumptions are adopted.

\begin{assumption}\label{assump:controllability}
For each agent \(i\), the pair \((A_i, B_i)\) is controllable.
\end{assumption}
\vspace{-.15in}
\begin{assumption}\label{assump:stability}
For each agent \(i\), the state matrix \(A_i\) is marginally stable: all eigenvalues lie in the closed unit disk, and any eigenvalue on the unit circle has equal algebraic and geometric multiplicity.
\end{assumption}
\vspace{-.15in}
\begin{assumption}
\label{assump:neighbor_model}
Each agent is assumed to know the nominal dynamics $(A_j,B_j,C_j)$ of those agents with which communication connectivity is to be preserved, as these models are specified during system integration. At each control step, agents update the most recent neighbor output (or position) information received.
\end{assumption}

These assumptions are essential for the communication-aware coverage problem. Assumption \ref{assump:controllability} ensures that each agent can maneuver its own state through admissible inputs, while Assumption \ref{assump:stability} guarantees bounded state evolution. Assumption \ref{assump:neighbor_model} allows an agent to compute the reachable sets of its neighbors from their current states and known dynamics, which is necessary for enforcing communication constraints over the prediction horizon.

\subsection{Wasserstein Distance and Optimal Transport}

To formalize the notion of non-uniform coverage, we employ optimal transport theory \cite{villani2008optimal}, with particular emphasis on the Wasserstein distance. For two discrete probability measures \(\rho\) and \(\nu\) on a metric space \((\mathcal{X},d)\), the \(p\)-Wasserstein distance is given by
\begin{equation}
\textstyle
\mathcal{W}_p(\rho,\nu) = \left( \min_{\pi_{\ell j}} \sum_{\ell=1}^M \sum_{j=1}^N \pi_{\ell j}\, d(y_\ell,q_j)^p \right)^{1/p},\label{eq:W-dist}
\end{equation}
subject to the constraints
\[
\textstyle
\pi_{\ell j}\geq 0, \, \sum_{j=1}^N \pi_{\ell j} = \alpha_\ell, \, \sum_{\ell=1}^M \pi_{\ell j} = \beta_j, \, \sum_{\ell,j}\pi_{\ell j} = 1,
\]
where \(\pi_{\ell j}\) specifies the amount of probability mass transported from \(y_\ell\) to \(q_j\). In this paper we focus on the quadratic case (\(p=2\)) with Euclidean distance as the ground cost for $d(\cdot)$.

We distinguish between two categories of points.  
\emph{Agent points} \(y_\ell(k) \in \mathbb{R}^d\) denote the positions of agents at time step \(k\), evolving according to the LTI dynamics \eqref{eq:dyn}.  
\emph{Sample points} \(q_j \in \mathbb{R}^d\) represent fixed reference locations that encode the desired spatial distribution.  

Each agent has a finite operation time and produces at most \(M_i\) agent points, yielding a total of \(M=\sum_{i=1}^{n_a}M_i\) points for total \(n_a\) agents. For simplicity, uniform weights are assumed: \(\alpha_\ell = 1/M_i\) for agent points and \(\beta_j=1/N\) for samples.

To assess coverage quality, we compare the empirical agent distribution with the reference distribution:
\begin{equation}
\rho(k) = \tfrac{1}{k+1}\sum_{t=0}^k\bigg(\tfrac{1}{n_a}\sum_{i=1}^{n_a}\delta_{y_i(t)}\bigg), 
\, 
\nu = \frac{1}{N}\sum_{j=1}^N \delta_{q_j},
\end{equation}
where \(\delta_y\) is the Dirac measure. The discrepancy between the two is quantified at time \(k\) by \(\mathcal{W}_2(\rho(k),\nu)\). The objective of Density-Driven Optimal Control (D$^2$OC) is to minimize the Wasserstein distance \(\mathcal{W}_2(\rho(k),\nu)\) between the empirical agent distribution \(\rho(k)\) and the reference density \(\nu\), subject to constraints such as the number of agents, operation time, and communication range.

\subsection{Decentralized Coverage Protocol}
Directly minimizing \(\mathcal{W}_2(\rho(k),\nu)\) is challenging due to its nonconvexity and high dimensionality. To address this, each agent solves a local subproblem that considers only nearby sample points rather than the full reference map. By defining a local Wasserstein distance over these points, agents can compute feasible control inputs that progressively reduce the overall discrepancy while satisfying dynamics and actuation constraints.  

Within this framework, D$^2$OC organizes agent behavior into three recurring stages:

\begingroup
\setlength{\leftmargini}{12pt} 
\begin{enumerate}
    \item \textbf{Sample selection and control input:}  
    Each agent identifies nearby sample points with relatively high remaining weights and computes a feasible control input to reduce its local Wasserstein distance while respecting dynamics and actuation limits.
    
    \item \textbf{Weight adjustment:}  
    After executing its control input, agent $i$ updates (reduces) the weights of the sample points $j$ it has covered or influenced, denoted by $\beta_{i,j}(k)$, recording its coverage progress at time $k$.
    
    \item \textbf{Information exchange for multi-agent collaboration:}  
    When agents come within communication range, they synchronize weight information by adopting the minimum observed weights among neighbors. This ensures global coverage consistency and prevents redundant exploration.
\end{enumerate}
\endgroup

The first two stages are executed independently by each agent (decentralized control), while the third stage enables coordination through weight sharing among neighbors. Repeated execution of this cycle gradually aligns the empirical distribution of agents with the reference distribution over their operational horizon. The objective of this study is to determine the optimal control input under a connectivity-preserving constraint in the first stage, while the full description of each stage can be found in \cite{seo2025density}.

\section{Formulation of the D$^2$OC Cost Function via Quadratic Programming}

This section presents the formulation of the D$^2$OC optimization problem in terms of the Wasserstein distance. In general, for a discrete-time system, the control input $u(k)$ does not immediately influence the system output $y(k)$ at the same time step. Instead, the input affects the output after a certain number of steps, which is formalized using the concept of output relative degree.

\begin{definition}[Output Relative Degree of a Discrete-Time LTI System]\label{def:output_relative_degree}
Consider the discrete-time LTI system \eqref{eq:dyn}. The \emph{output relative degree} \(r \in \mathbb{Z}_{>0}\) is the smallest positive integer such that
$C_i A_i^{r-1} B_i \neq 0$, and $C_i A_i^{\ell-1} B_i
= 0$ for all $\ell = 1, \ldots, r-1$.
\end{definition}

This defines the number of time steps required for the control input \(u_i(k)\) to first have a direct effect on the output vector \(y_i(k)\). Then, the cost function for agent $i$ using the squared local Wasserstein distance to achieve D$^2$OC is defined over the prediction horizon \(H \in \mathbb{Z}\), starting from time \(k+r\):
{\small
\begin{equation}
\sum_{h=r}^{r+H-1} \mathcal{W}^2_i(k+h) := \sum_{h=r}^{r+H-1} \sum_{j \in \mathcal{S}_i({k+h)}} \pi_j(k+h) \, \|y_i(k+h) - q_j\|^2,
\label{eq:local_wasserstein_cost}
\end{equation}
}
subject to agent dynamics \eqref{eq:dyn} and the Wasserstein distance constraints in \eqref{eq:W-dist}.
The symbol $\mathcal{S}_i(k+h)$ denotes the set of \emph{local sample points} selected for agent $i$ at time $k+h$, based on (i) remaining weights $\beta_{i,j}(k+h)$ of sample points, prioritizing points not yet fully covered, and (ii) proximity to agent $i$, ensuring computational tractability and focus on nearby regions (see \cite{seo2025density}). The transport weight $\pi_j(k+h)$ represents agent $i$’s contribution to sample $q_j$ under the local optimal transport plan. The agent index does not appear in $\pi_j$ since it corresponds to a single point. This formulation captures the cost of moving agents to assigned targets while respecting dynamics over the prediction horizon.

Leveraging this property, we establish the following result.

\medskip
\begin{proposition}\label{proposition:equiv}
Let $\mathcal{S}_i(k+h)$ denote the index set of local sample points for agent $i$ 
at time $k+h$. 
Let the transport weights $\pi_j(k+h) \ge 0$ be 
locally computed over the prediction window
$h = r,\ldots,r+H{-}1$ based on the selected local samples.
Define the weighted barycenter at \(k+h\) as
$
\bar{q}_i(k+h) := \dfrac{1}{\sum_{j \in \mathcal{S}_i(k+h)} \pi_j(k+h)}\sum_{j \in \mathcal{S}_i(k+h)} \pi_j(k+h) q_j
$
and 

\begin{equation}
\begin{aligned}
Y_i^{k|r:H} &:= [\, y_i(k+r)^\top \!~\cdots~\! y_i(k+r\!+\!H\!-\!1)^\top \,]^\top,\\
\bar Q_i^{k|r:H} &:= [\, \bar q_i(k+r)^\top \!~\cdots~\! \bar q_i(k+r\!+\!H\!-\!1)^\top \,]^\top,\\
\boldsymbol{\Omega}_i^{k|r:H} &:= 
\mathrm{blkdiag}\!\big(
\sqrt{\textstyle\sum_{j\in\mathcal{S}_i(k+h)}\!\pi_j(k+h)}\,\mathbf I_d
\big)_{h=r}^{r+H-1}.
\end{aligned}\label{eq:Y,Q,Omega}
\end{equation}

With `const.' denoting all terms independent of $Y_i^{k|r:H}$, we have
{\small
\begin{align*}
\sum_{h=r}^{r+H-1} \mathcal{W}^2_i(k+h)
&= \left\| \boldsymbol{\Omega}_i^{k|r:H} \big( Y_i^{k|r:H} - \bar Q_i^{k|r:H} \big) \right\|^2 + \mathrm{const.},
\end{align*}
}
\end{proposition}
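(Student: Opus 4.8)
The plan is to work one time index $h$ at a time and expand the squared norm $\|y_i(k+h)-q_j\|^2$ in the definition \eqref{eq:local_wasserstein_cost}, then complete the square in $y_i(k+h)$. Write $P_h := \sum_{j\in\mathcal{S}_i(k+h)}\pi_j(k+h)$ for the total transported mass at step $h$ and abbreviate $y := y_i(k+h)$, $\bar q := \bar q_i(k+h)$. The inner sum is $\sum_j \pi_j\big(\|y\|^2 - 2\,y^\top q_j + \|q_j\|^2\big) = P_h\|y\|^2 - 2\,y^\top\!\big(\sum_j \pi_j q_j\big) + \sum_j \pi_j\|q_j\|^2$. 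By the definition of the barycenter, $\sum_j \pi_j q_j = P_h\,\bar q$, so this equals $P_h\|y\|^2 - 2P_h\,y^\top \bar q + \sum_j \pi_j\|q_j\|^2 = P_h\|y-\bar q\|^2 + \big(\sum_j \pi_j\|q_j\|^2 - P_h\|\bar q\|^2\big)$. The bracketed remainder depends only on the fixed sample points $q_j$ and the (locally precomputed) weights $\pi_j(k+h)$, hence is independent of $Y_i^{k|r:H}$ and folds into $\mathrm{const.}$

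Next I would recognize $P_h\|y-\bar q\|^2 = \big\|\sqrt{P_h}\,\mathbf I_d\,(y - \bar q)\big\|^2$, which is exactly the $h$-th diagonal block of $\boldsymbol{\Omega}_i^{k|r:H}$ acting on the $h$-th block of $Y_i^{k|r:H}-\bar Q_i^{k|r:H}$. Summing over $h = r,\ldots,r+H-1$ and using the block-diagonal structure, $\sum_h P_h\|y_i(k+h)-\bar q_i(k+h)\|^2 = \big\|\boldsymbol{\Omega}_i^{k|r:H}\big(Y_i^{k|r:H}-\bar Q_i^{k|r:H}\big)\big\|^2$, since the squared Euclidean norm of a stacked vector is the sum of the squared norms of its blocks and $\boldsymbol{\Omega}_i^{k|r:H}$ is block diagonal. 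Collecting the per-step constant remainders into a single $\mathrm{const.}$ term yields the claimed identity.

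This argument is essentially a routine completion-of-square computation, so there is no deep obstacle; the only points requiring care are bookkeeping ones. First, one must confirm that $P_h > 0$ so that $\bar q_i(k+h)$ is well defined and $\sqrt{P_h}$ is real — this is where the hypothesis $\pi_j(k+h)\ge 0$ together with the assumption that $\mathcal{S}_i(k+h)$ is nonempty (so the marginal/normalization constraints in \eqref{eq:W-dist} force $\sum_j \pi_j(k+h) > 0$) is used. Second, one should note that the transport weights are treated as fixed data in this step: they are computed from the current configuration and sample weights, not jointly optimized with $Y_i^{k|r:H}$, so the ``const.'' label is legitimate. Finally, the relative-degree bookkeeping — that the horizon starts at $k+r$ rather than $k$ — plays no role in this particular identity; it only matters later when $Y_i^{k|r:H}$ is expressed affinely in the control inputs. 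Hence the proof is a direct verification.
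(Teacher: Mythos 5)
Your proposal is correct and follows essentially the same route as the paper's proof: a per-step completion of the square around the weighted barycenter, with the residual term (your $\sum_j\pi_j\|q_j\|^2 - P_h\|\bar q\|^2$, which equals the paper's $\sum_j\pi_j\|q_j-\bar q\|^2$) absorbed into the constant, followed by stacking over the horizon via the block-diagonal structure of $\boldsymbol{\Omega}_i^{k|r:H}$. Your added remarks on $P_h>0$ and on the transport weights being fixed data are sensible bookkeeping that the paper leaves implicit.
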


\begin{proof}
Expanding the quadratic Wasserstein term for each $h$,
{\small
\begin{align*}
&\sum_{h=r}^{r\texttt{+}H-1}\!\mathcal{W}^2_i(k\texttt{+}h)
= \sum_{h=r}^{r\texttt{+}H-1}\!\sum_{j\in\mathcal{S}_i(k\texttt{+}h)} 
\pi_j(k\texttt{+}h)\|y_i(k\texttt{+}h)-q_j\|^2\\
&=\sum_{h=r}^{r\texttt{+}H-1}\!
\Big(\!\sum_j \pi_j(k\texttt{+}h)\!\Big)\!
\|y_i(k\texttt{+}h)-\bar{q}_i(k\texttt{+}h)\|^2
\texttt{+}\!\sum_{h=r}^{r\texttt{+}H-1}\!C(h),
\end{align*}
}
where \(C(h) := \sum_{j} \pi_j(k+h) \| q_j - \bar{q}_i(k+h) \|^2\) is independent of the decision variables. Stacking the terms yields the compact form
$
\big\| \boldsymbol{\Omega}_i^{k|r:H}( Y_i^{k|r:H} - \bar{Q}_i^{k|r:H}) \big\|^2 + \mathrm{const.}
$
\end{proof}

To apply Proposition~1 within the optimal control formulation, 
the stacked output $Y_i^{k|r:H}$ in \eqref{eq:Y,Q,Omega} over the horizon is written in affine form using the stacked input
$U_i^{k|H} := [\,u_i(k)^\top ~ \cdots ~ u_i(k+H-1)^\top\,]^\top \in \mathbb{R}^{mH}$ as
\begin{equation}
    Y_i^{k|r:H} = \Theta_i U_i^{k|H} + \Phi_i x_i(k),
    \label{eq:Y_i}
\end{equation}
where $\Theta_i \in \mathbb{R}^{dH\times mH}$ and 
$\Phi_i \in \mathbb{R}^{dH\times n}$ are given by
{\small
\begin{align}
\Theta_i &:=
\begin{bmatrix}
C_i A_i^{r-1} B_i & \mathbf{0} & \cdots & \mathbf{0} \\
C_i A_i^{r}   B_i & C_i A_i^{r-1} B_i & \cdots & \mathbf{0} \\
\vdots & \vdots & \ddots & \vdots \\
C_i A_i^{r+H-2} B_i & C_i A_i^{r+H-3} B_i & \cdots & C_i A_i^{r-1} B_i
\end{bmatrix},\label{eq:Theta}\\[1ex]
\Phi_i &:=
\begin{bmatrix}
(C_i A_i^{r})^\top,\,
(C_i A_i^{r+1})^\top,\,
\ldots,\,
(C_i A_i^{r+H-1})^\top
\end{bmatrix}^{\top}.
\label{eq:Phi}
\end{align}
}

Combining \eqref{eq:local_wasserstein_cost} with the input penalty $\|U_i^{k|H}\|^2_{R_i}$, where $R_i \succ 0$, yields
\begin{align}
J(U_i^{k|H}) := \sum_{h=r}^{r+H-1} \mathcal{W}^2_i(k+h) + \|U_i^{k|H}\|^2_{R_i}.
\label{eq:D2OC_optimization_prob}
\end{align}

Finally, by utilizing Proposition~1 together with the input--output relation \eqref{eq:Y_i}, the control objective becomes the quadratic form
\begin{equation}
\begin{aligned}
J(U_i^{k|H})
&= \tfrac12 (U_i^{k|H})^\top H_i U_i^{k|H}
   + f_i^\top U_i^{k|H} + \mathrm{const.}, \\
H_i &:= 2\!\left((\boldsymbol{\Omega}_i^{k|r:H}\Theta_i)^\top
                 (\boldsymbol{\Omega}_i^{k|r:H}\Theta_i) + R_i\right),\\
f_i &:= 2(\boldsymbol{\Omega}_i^{k|r:H}\Theta_i)^\top
        \boldsymbol{\Omega}_i^{k|r:H}(\Phi_i x_i(k)-\bar Q_i^{k|r:H}),
\label{eq:QP}
\end{aligned}
\end{equation}
where `const.' collects all terms independent of $U_i^{k|H}$.

\begin{theorem}[Uniqueness of the Unconstrained Optimal Input]
\label{theorem:W_min_opt_con}
For agent \(i\) in the multi-agent system, governed by the LTI dynamics
\eqref{eq:dyn} with output relative degree \(r\) and prediction horizon \(H\),
the quadratic cost in \eqref{eq:QP} has the unconstrained optimal input
\begin{equation}
(U_i^{k|H})^{\mathrm{uncon}} = -\,H_i^{-1} f_i, \label{eq:unc_opt}
\end{equation}
and this solution is the unique global minimizer.
\end{theorem}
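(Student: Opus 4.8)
The plan is to show that the Hessian $H_i$ defined in \eqref{eq:QP} is symmetric positive definite, so that the quadratic $J$ is strictly convex and its stationary point, obtained by setting the gradient to zero, is the unique global minimizer. First I would compute the gradient of $J(U_i^{k|H}) = \tfrac12 (U_i^{k|H})^\top H_i U_i^{k|H} + f_i^\top U_i^{k|H} + \mathrm{const.}$ with respect to $U_i^{k|H}$. Since $H_i$ is symmetric by construction (it is $2$ times a Gram matrix plus $2R_i$), the gradient is $\grad J = H_i U_i^{k|H} + f_i$, and setting $\grad J = 0$ gives the candidate $(U_i^{k|H})^{\mathrm{uncon}} = -H_i^{-1} f_i$, provided $H_i$ is invertible.

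The core of the argument is establishing $H_i \succ 0$. Write $H_i = 2\big( M_i^\top M_i + R_i \big)$ where $M_i := \boldsymbol{\Omega}_i^{k|r:H}\Theta_i$. For any nonzero $v \in \mathbb{R}^{mH}$, we have $v^\top H_i v = 2\big( \|M_i v\|^2 + v^\top R_i v \big)$. The term $\|M_i v\|^2 \ge 0$ always, and since $R_i \succ 0$ by assumption, $v^\top R_i v > 0$ for $v \neq 0$. Hence $v^\top H_i v > 0$ for all $v \neq 0$, so $H_i \succ 0$; in particular $H_i$ is invertible and the formula \eqref{eq:unc_opt} is well-defined. (One may additionally note that $M_i$ itself has full column rank: $\boldsymbol{\Omega}_i^{k|r:H}$ is a positive-definite diagonal scaling whenever every selected-sample weight sum is positive, and $\Theta_i$ is block lower-triangular with invertible diagonal blocks $C_i A_i^{r-1} B_i$ — nonzero, and in the scalar-per-output-channel sense invertible — by Definition~\ref{def:output_relative_degree}; but this stronger observation is not needed once $R_i \succ 0$ is invoked.)

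Finally I would invoke the standard fact that a strictly convex quadratic function on $\mathbb{R}^{mH}$ has a unique global minimizer, namely its unique stationary point. Concretely, for $H_i \succ 0$ one has the exact identity $J(U) = \tfrac12 (U - U^\star)^\top H_i (U - U^\star) + \mathrm{const.}$ with $U^\star = -H_i^{-1}f_i$, from which $J(U) > J(U^\star)$ for every $U \neq U^\star$ follows immediately because $(U-U^\star)^\top H_i (U-U^\star) > 0$. This completes the proof.

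I do not expect a serious obstacle here; the only point requiring care is confirming that $H_i$ is genuinely symmetric (immediate from its Gram-plus-$R_i$ form, using $R_i = R_i^\top$) and that the positive-definiteness does not secretly rely on properties of $\Theta_i$ or $\boldsymbol{\Omega}_i^{k|r:H}$ — it does not, since $R_i \succ 0$ alone suffices. The controllability and relative-degree structure (Assumption~\ref{assump:controllability}, Definition~\ref{def:output_relative_degree}) is what makes $\Theta_i$ well-posed and the cost meaningful, but strict convexity is driven entirely by the input penalty.
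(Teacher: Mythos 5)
Your proof is correct and follows essentially the same route as the paper: set the gradient $H_i U_i^{k|H} + f_i$ to zero, and establish $H_i \succ 0$ from $R_i \succ 0$ plus the positive-semidefinite Gram term, which gives strict convexity and hence uniqueness of the global minimizer. Your extra remarks (symmetry of $H_i$, the exact completion-of-squares identity, and the parenthetical on full column rank of $\boldsymbol{\Omega}_i^{k|r:H}\Theta_i$ -- which, as you correctly note, is not needed) only elaborate on the same argument.
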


\begin{proof}
Taking the gradient of \eqref{eq:QP} and setting it to zero yields
\(
H_i U_i^{k|H} + f_i = 0
\),
so the unconstrained minimizer is
\(
(U_i^{k|H})^{\mathrm{uncon}} = -H_i^{-1} f_i.
\)

Since \(R_i \succ 0\), the Hessian 
\(
H_i = 2\big((\boldsymbol{\Omega}_i^{k|r:H}\Theta_i)^{\!\top}
         (\boldsymbol{\Omega}_i^{k|r:H}\Theta_i) + R_i\big)
\)
satisfies \(H_i \succ 0\).
Thus the cost is strictly convex and the minimizer is unique.
\end{proof}

\section{Connectivity-Preserving D$^2$OC}

\subsection{Connectivity Constraint with Reachable Sets}

To maintain communication over the prediction horizon, agent $i$ and a designated neighbor $j$ satisfy
\begin{equation}
\begin{aligned}
g_{ij}(k+h) := &~\|x_i(k+h) - x_j(k+h)\|^2 \\
&- r_{\mathrm{comm}}^2 \le 0, \quad h = r,\dots,r+H-1,
\end{aligned}
\end{equation}
with relative degree $r$, horizon $H$, and communication radius $r_{\mathrm{comm}}>0$.  
The neighbor $j$ is selected according to a user-specified connected communication topology (e.g., chain, tree), and global connectivity is preserved as long as the resulting graph remains connected.
Since agent $i$ cannot know agent $j$'s exact future outputs, we describe agent $j$'s possible outputs via a reachable set.

\subsubsection*{Reachable-Set Formulation}
Let the reachable set of agent $j$ in output space be the zonotope
\begin{equation}
\mathcal{Z}^y_j(k+h) = \hat{y}_j(k+h) \oplus G_j(k+h)\, \mathrm{diag}(\Delta u_j)\, \mathcal{B}_\infty,\label{eq:r-set}
\end{equation}
where $\hat{y}_j(k+h)$ denotes the nominal prediction based on $(A_j,B_j,C_j)$ and the most recently exchanged output. Here, $\mathcal{B}_\infty = \{ z \in \mathbb{R}^m : \|z\|_\infty \le 1 \}$ is the unit hypercube scaled by the half-range $\Delta u_j = (u_{\max}-u_{\min})/2$, where $u_{\max}$ and $u_{\min}$ are the maximum and minimum admissible inputs. Note that the input shift $\hat{u}_j = (u_{\max}+u_{\min})/2$ is not included in \eqref{eq:r-set}, which is required for the exact reachable set, while it is unnecessary for computing the conservative radius of the reachable set.
The generator matrix $G_j(k{+}h) \in \mathbb{R}^{d\times m}$ maps bounded control deviations to output deviations:
\begin{equation}
G_j(k{+}h)
=
\big[
C_j A_j^{\,h-1} B_j,\;
C_j A_j^{\,h-2} B_j,\;
\ldots,\;
C_j A_j^{\,r-1} B_j
\big],
\label{eq:G_def}
\end{equation}
with $G_j(k{+}h)=0$ for $h<r$. Each column represents the effect of one input direction.

Equivalently,
\[
\mathcal{Z}^y_j(k+h) = \{ \hat{y}_j(k+h) + G_j(k+h)\, \mathrm{diag}(\Delta u_j)\, z \mid \|z\|_\infty \le 1 \}.
\]

\subsubsection*{Conservative Scalar Inequality}
For tractability, approximate the set inclusion with
\[
R_j(k+h) = \max_{\|z\|_\infty \le 1} \| G_j(k+h)\, \mathrm{diag}(\Delta u_j)\, z \|,
\]
and enforce
$\|y_i(k+h) - \hat{y}_j(k+h)\| \le r_{\mathrm{comm}} - R_j(k+h)$, $h=r,\dots,r+H-1$,
ensuring that the reachable set lies within agent $i$'s communication ball.

For fast online checks, a conservative bound is
\[
R_j(k{+}h) = \max_{\ell} \| G_{j,\ell}(k{+}h)\, \Delta u_j \|,
\]
where $G_{j,\ell}(k{+}h)$ is the $\ell$-th column of $G_j(k+h)$.

\subsection{Soft-Constraint for Connectivity-Preserving Control}
We relax the feasibility requirement by introducing a \emph{soft penalty} on communication range violations.  
Specifically, for each horizon step $h \in \{r,\dots,r+H-1\}$, let the predicted output be
\begin{equation}
y_i(k+h) = C_i A_i^h x_i(k) + \sum_{s=0}^{h-1} C_i A_i^{h-1-s} B_i\, u_i(k+s).
\end{equation}

Defining
$f_i(h) = C_i A_i^h x_i(k) - \hat y_j(k+h)$,
$F_i(h) = \begin{bmatrix} C_i A_i^{h-1} B_i  \cdots & C_i B_i \end{bmatrix}$,
the nominal inter-agent distance becomes
\begin{equation}
d_i(h) := \| y_i(k+h) - \hat{y}_j(k+h) \| = \| f_i(h) + F_i(h) U_i^{k|h} \|.\label{eq:d_i}
\end{equation}

To penalize violations of the communication threshold $r_{\mathrm{comm}} - R_j(k+h)$, 
We now introduce the smooth penalty using the log-sum-exp
\begin{align}
\phi(T_i(h)) &= \frac{\kappa}{\eta} \log \big( 1 + \exp(\eta \, T_i(h)) \big), 
\quad \kappa, \eta > 0,
\end{align}
where $T_i(h) := d_i(h) - \big(r_{\mathrm{comm}} - R_j(k+h)\big)$ and $\kappa>0$ scales the penalty and $\eta>0$ controls its steepness.

This convex function vanishes when $d_i(h) \le r_{\mathrm{comm}} - R_j(k+h)$ 
and increases smoothly when the threshold is exceeded.  
The overall soft-constraint cost now becomes
\begin{equation}
J_{\text{soft}}(U_i^{k|H}) = J(U_i^{k|H}) + \textstyle\sum_{h=r}^{r+H-1} \phi(T_i(h)). \label{eq:J_soft}
\end{equation}

\begin{remark}[Extension to Collision Avoidance]
The formulation in \eqref{eq:J_soft} can be further extended to include an additional soft constraint for inter-agent collision avoidance by penalizing violations of a minimum distance $d_{\min}$.
\end{remark}

\begin{remark}[Robustness to Communication Delays and Uncertainties]
The reachable-set formulation provides inherent tolerance to limited communication delays and imperfect knowledge of neighboring agents' behavior. When output updates from neighbors are delayed, their effects are captured by the conservative zonotope radius \(R_j(k+h)\), ensuring that connectivity is maintained. The soft-constraint term further improves robustness by smoothly penalizing temporary violations instead of enforcing strict feasibility.
\end{remark}

\begin{theorem}[Strict Convexity of Soft-Constraint Control with Box Constraints]
\label{thm:soft_box}
Consider the discrete-time linearized dynamics of agent $i$ in \eqref{eq:dyn} with output relative degree $r$.  
Let the soft-constrained D$^2$OC cost be given by \eqref{eq:J_soft}, subject to the admissible input set
\[
\mathcal U_i^{k|H} := \{ U_i^{k|H} \in \mathbb{R}^{mH} \mid u_{\min}^{(H)} \le U_i^{k|H} \le u_{\max}^{(H)} \}.
\]
Then, $J_{\mathrm{soft}}$ is strictly convex, and the optimization problem admits a unique minimizer.
\end{theorem}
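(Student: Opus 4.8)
The plan is to split $J_{\mathrm{soft}}$ into a strictly convex quadratic part plus a sum of convex penalties, establish strict convexity of the sum on $\mathbb{R}^{mH}$, and then obtain existence and uniqueness of the constrained minimizer by combining the Weierstrass extreme value theorem with strict convexity.

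First I would reuse Theorem~\ref{theorem:W_min_opt_con}: the D$^2$OC term $J(U_i^{k|H})$ in \eqref{eq:QP} is quadratic with Hessian $H_i = 2\big((\boldsymbol{\Omega}_i^{k|r:H}\Theta_i)^\top(\boldsymbol{\Omega}_i^{k|r:H}\Theta_i)+R_i\big)$, and $R_i\succ 0$ already forces $H_i\succ 0$, so $J$ is strictly convex. Next I would show each penalty $\phi(T_i(h))$ is convex in $U_i^{k|H}$. By \eqref{eq:d_i}, $d_i(h)=\norm{f_i(h)+F_i(h)U_i^{k|h}}$ is the Euclidean norm of an affine function of $U_i^{k|H}$ (extend $F_i(h)$ by zero columns so it acts on the full $U_i^{k|H}$), hence convex; subtracting the constant $r_{\mathrm{comm}}-R_j(k+h)$ leaves $T_i(h)$ convex. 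The softplus $\phi(s)=\tfrac{\kappa}{\eta}\log(1+\exp(\eta s))$ has $\phi'(s)=\kappa/(1+\exp(-\eta s))\in(0,\kappa)$ and $\phi''(s)>0$ for $\kappa,\eta>0$, so it is convex and strictly increasing on $\mathbb{R}$. Since the composition of a convex nondecreasing outer function with a convex inner function is convex, each $\phi(T_i(h))$ is convex; it may fail to be differentiable where $f_i(h)+F_i(h)U_i^{k|h}=0$, but convexity does not require smoothness.

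Then I would assemble the pieces: $J_{\mathrm{soft}}=J+\sum_{h=r}^{r+H-1}\phi(T_i(h))$ is a strictly convex function plus finitely many convex functions, hence strictly convex on $\mathbb{R}^{mH}$ and in particular on $\mathcal U_i^{k|H}$. Finally, $\mathcal U_i^{k|H}$ is a nonempty (the box bounds are componentwise consistent by construction), closed, bounded, convex set and $J_{\mathrm{soft}}$ is continuous, so the Weierstrass theorem yields a minimizer, and strict convexity over the convex set $\mathcal U_i^{k|H}$ makes it unique.

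\noindent\textbf{Main obstacle.} There is little genuine difficulty here; the only care needed is not to over-claim. In particular $d_i(h)$ is convex but generally \emph{not} strictly convex, because $F_i(h)$ may be rank-deficient and $\norm{\cdot}$ is affine along rays, so all of the strict convexity must come from the quadratic term via $H_i\succ 0$; and the composition argument must invoke the monotonicity of the softplus, not merely its convexity. The nonsmoothness of $d_i(h)$ at the origin of its argument is harmless for convexity and uniqueness, though it would require subgradient calculus if first-order optimality conditions were also sought (they are not claimed here).
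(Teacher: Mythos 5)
Your proposal is correct and follows essentially the same route as the paper: strict convexity of the quadratic D$^2$OC term from $H_i\succ 0$, convexity of each penalty via the composition of the nondecreasing convex softplus with the norm of an affine map, summation, and then existence by Weierstrass on the closed bounded box plus uniqueness from strict convexity. Your added remarks that $d_i(h)$ itself is neither strictly convex nor everywhere differentiable, and that all strictness must come from $R_i\succ 0$, are accurate refinements of the same argument rather than a different approach.
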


\begin{proof}
The proof follows the original convexity argument, with the box constraints included.
For each \(h\), the predicted output \(y_i(k+h)\) is an affine function of the 
stacked input \(U_i^{k|H}\), and the inter-agent distance is given by \eqref{eq:d_i}.

\paragraph{Convexity of the distance term}
The map $U_i^{k|H} \mapsto d_i(h)$ is the Euclidean norm composed with an affine map, hence convex.

\paragraph{Convexity and monotonicity of the scalar penalty}
Consider the scaled log-sum-exp function
\begin{equation}
\phi(z) = \frac{\kappa}{\eta} \log\big(1 + \exp(\eta z)\big), 
\quad \kappa, \eta > 0.
\end{equation}
Its derivatives are
$
\phi'(z) = \kappa \dfrac{\exp(\eta z)}{1 + \exp(\eta z)} \ge 0,\,\, 
\phi''(z) = \kappa \dfrac{\eta \, \exp(\eta z)}{(1 + \exp(\eta z))^2} \ge 0,
$
thus $\phi$ is convex and nondecreasing.

\paragraph{Composition with shifted distance}
Define the shifted distance
\begin{equation}
\textstyle
T_i(h) := d_i(h) - \big(r_{\mathrm{comm}} - R_j(k+h)\big),
\end{equation}
which is convex in $U_i^{k|h}$. By the composition rule for convex functions (convex, nondecreasing scalar composed with convex vector-valued map), $\phi(T_i(h))$ is convex in $U_i^{k|h}$.

\paragraph{Sum of convex terms and box constraints}
The total objective is
\begin{equation}
\small
\begin{aligned}
J_{\mathrm{soft}}(U_i^{k|H}) 
= &\frac12 (U_i^{k|H})^\top H_i U_i^{k|H} + f_i^\top U_i^{k|H} + \sum_{h=r}^{r+H-1} \phi(T_i(h)).
\end{aligned}
\end{equation}
Each term is convex; finite sums of convex functions are convex. The feasible set 
\(\mathcal U_i^{k|H} = \{ U_i^{k|H} \in \mathbb{R}^{mH} \mid u_{\min}^{(H)} \le U_i^{k|H} \le u_{\max}^{(H)} \}\)
is closed and convex. Hence, $\min_{U_i^{k|H} \in \mathcal U_i^{k|H}} J_{\mathrm{soft}}(U_i^{k|H})$ is convex.

\paragraph{Existence of minimizers}
If $\mathcal U_i^{k|H}$ is bounded, continuity of $J_{\mathrm{soft}}$ guarantees existence. 
If $\mathcal U_i^{k|H}$ is unbounded but $H_i \succ 0$, coercivity of the quadratic term ensures existence. 

\paragraph{Uniqueness under $H_i \succ 0$}
The quadratic term $\frac12 (U_i^{k|H})^\top H_i U_i^{k|H}$ is strictly convex. Adding convex penalties preserves strict convexity. Restricting a strictly convex function to the convex set $\mathcal U_i^{k|H}$ yields a unique minimizer.

\paragraph{KKT characterization}
Let $\lambda^\pm \ge 0$ be Lagrange multipliers for the box bounds. At the minimizer $(U_i^{k|H})^\star$,
\begin{equation*}
    \begin{aligned}
&\nabla J_{\mathrm{soft}}((U_i^{k|H})^\star) + \lambda^+ - \lambda^- = 0, \qquad
\lambda^\pm \ge 0,\\
&\lambda^+ \odot ((U_i^{k|H})^\star - u_{\max}^{(H)}) = 0,\,
\lambda^- \odot (u_{\min}^{(H)} - (U_i^{k|H})^\star) = 0.
    \end{aligned}
\end{equation*}

\medskip
Combining these steps, the soft-constrained problem with box constraints is strictly convex, and a minimizer exists and is unique since $H_i \succ 0$.
\end{proof}

\begin{remark}[Soft Connectivity and Effect of Penalty Parameters]\label{remark:soft and tuning}
The soft-constraint formulation provides a convex relaxation of the hard connectivity condition, allowing limited violations while maintaining overall communication. The penalty parameters $\kappa$ and $\eta$ jointly determine the strength and sharpness of this relaxation. Increasing $\kappa$ reinforces the
penalty and improves connectivity preservation, whereas larger $\eta$ produces a steeper transition toward a hard constraint and may introduce numerical stiffness if too large. In practice, $\kappa$ is increased until violations are consistently penalized, and $\eta$ is then selected as the smallest value that
sharpens the penalty without causing oscillatory or saturated behavior. Smaller $\eta$ values yield smoother but weaker enforcement, whereas larger $\eta$ values impose stricter connectivity at the cost of reduced smoothness. A margin $\gamma \in (0,1)$ scales the communication range $r_{\mathrm{comm}}$ to reduce soft-constraint violations.
\end{remark}

\begin{remark}[Computational Scalability]
In the decentralized D$^2$OC framework, each agent solves an independent convex optimization problem \eqref{eq:J_soft} using local information, enabling parallel computation and maintaining scalability as the number of agents increases. 
The prediction horizon affects only the dimension of each local problem and remains tractable for typical receding-horizon settings.
\end{remark}

\section{Simulations}
To evaluate the proposed connectivity-preserving D$^2$OC, simulations were performed with a twenty-agent system using a full 12-state linearized quadrotor model~\cite{powers2015quadrotor}, which has an output relative degree of $r=4$. The reference map was generated from a 3D point cloud distribution, shown as green dots in Fig.~\ref{fig:sim}(a,c), where the agents' final positions are marked with yellow circles. All agents were initialized near the origin (red crosses) and clustered at the start. 
With connectivity constraints, a chain-like topology was used, where 
agent~\(i\) remained within the communication range of agent~\(i{+}1\) 
for \(i<20\). The prediction horizon was set to \(H{=}1\) for simplicity. A larger horizon would provide better foresight in coordination but at 
a higher computational cost. In this setting, each local optimization 
required about \(10\) ms per agent in MATLAB, indicating that real-time 
execution is readily achievable in faster compiled implementations.
\begin{table}[h!]
\centering
\small
\caption{Key Simulation Parameters}
\label{tab:sim_params_compact}
\setlength{\tabcolsep}{6pt}
\renewcommand{\arraystretch}{0.9}
\begin{tabular}{l c}
\hline
\textbf{Description} & \textbf{Value} \\
\hline
Number of agents & $20$ \\
Number of reference sample points & $500$\\
Sampling period $T$ & $0.1$ s \\
Prediction horizon $H$ / relative degree $r$ & $1$ / $4$ \\
Communication range threshold $r_{\mathrm{comm}}$ & $15$ \\
Connectivity margin $\gamma$ & $0.8$ \\
Connectivity soft penalty $(\kappa,\eta)$ & $(750,\,0.25)$ \\
Minimum inter-agent distance $d_{\min}$ & $1$ \\
Agent maximum speed $v_{\max}$ & $10$ \\
\hline
\end{tabular}
\end{table}

\begin{figure}[tbph!]
    \centering
\subfloat[Trajectories without constraint]{
    \includegraphics[width=0.53\linewidth]{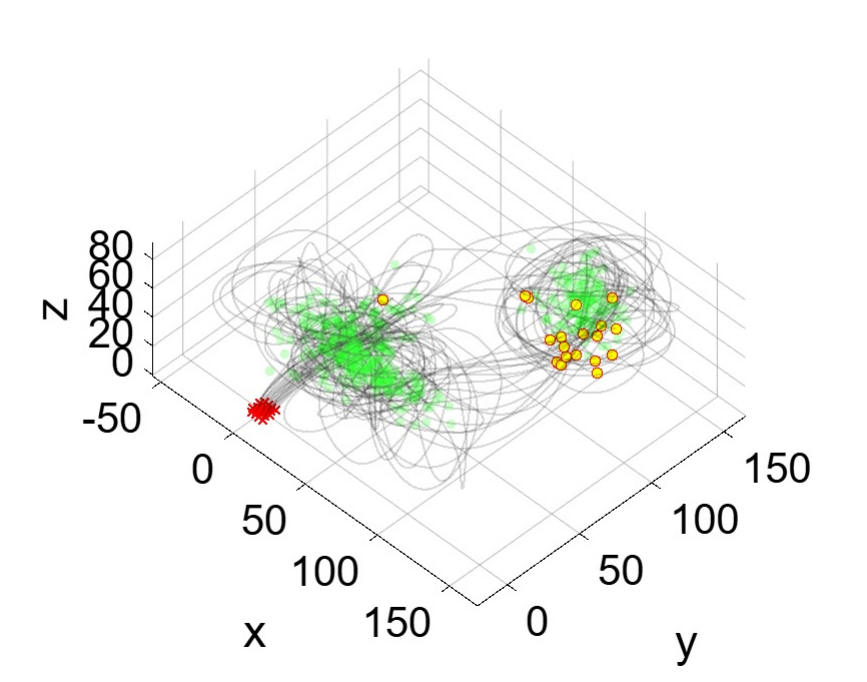}}\,
\subfloat[Inter-agent distances without constraint]{
    \includegraphics[width=0.42\linewidth]{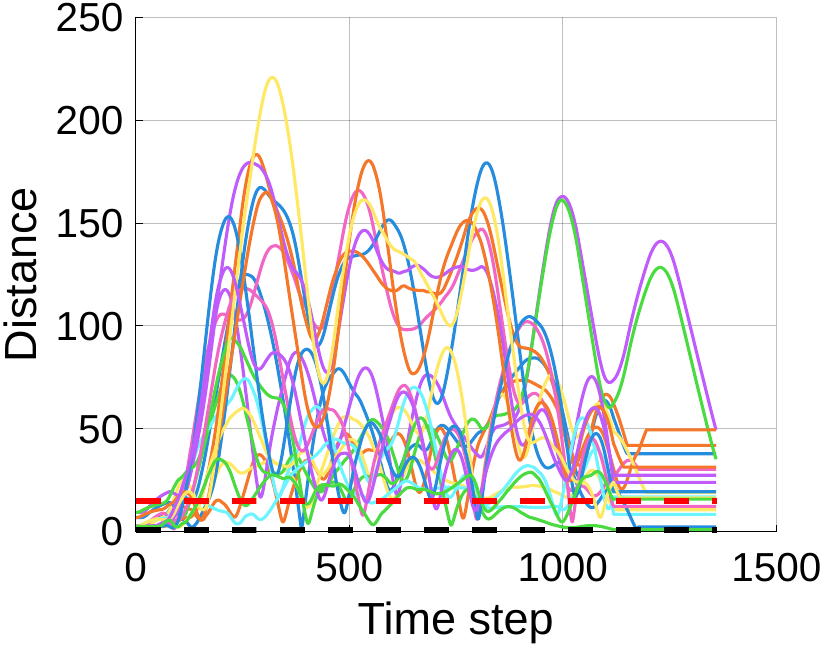}}\\
\subfloat[Trajectories with $r_{\mathrm{comm}}=15$]{
    \includegraphics[width=0.53\linewidth]{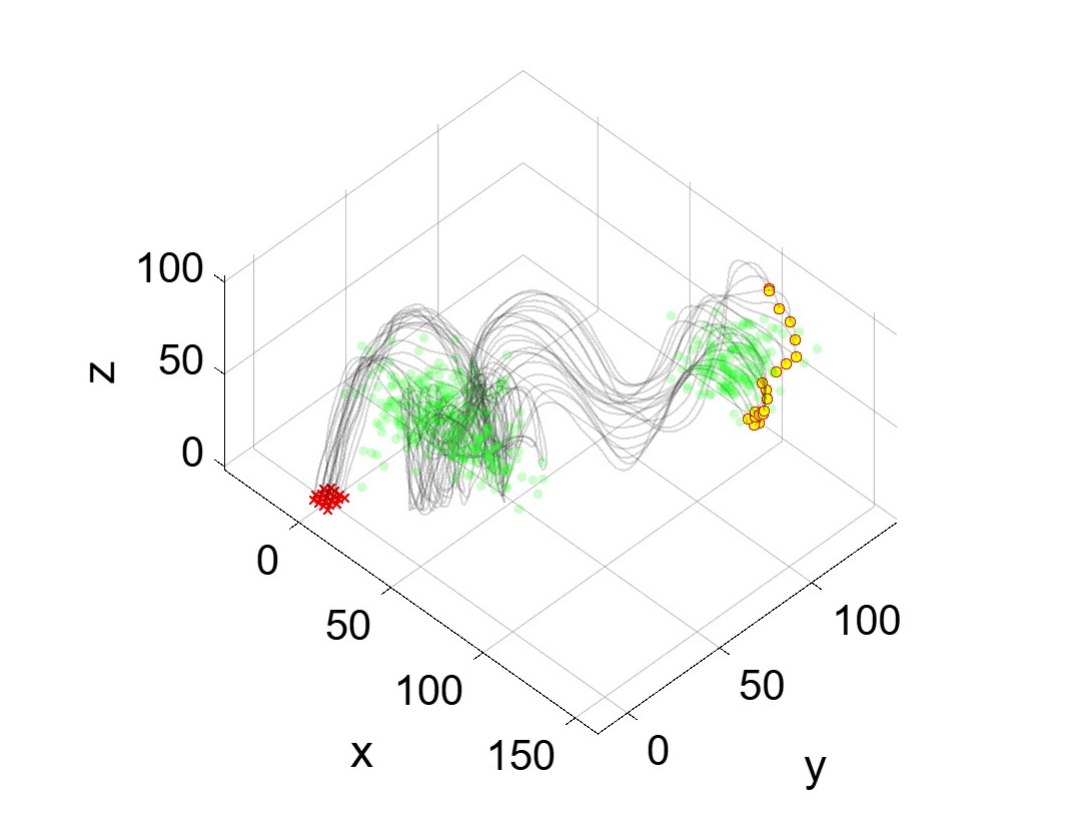}}\,
\subfloat[Inter-agent distances with $r_{\mathrm{comm}}=15$]{
    \includegraphics[width=0.42\linewidth]{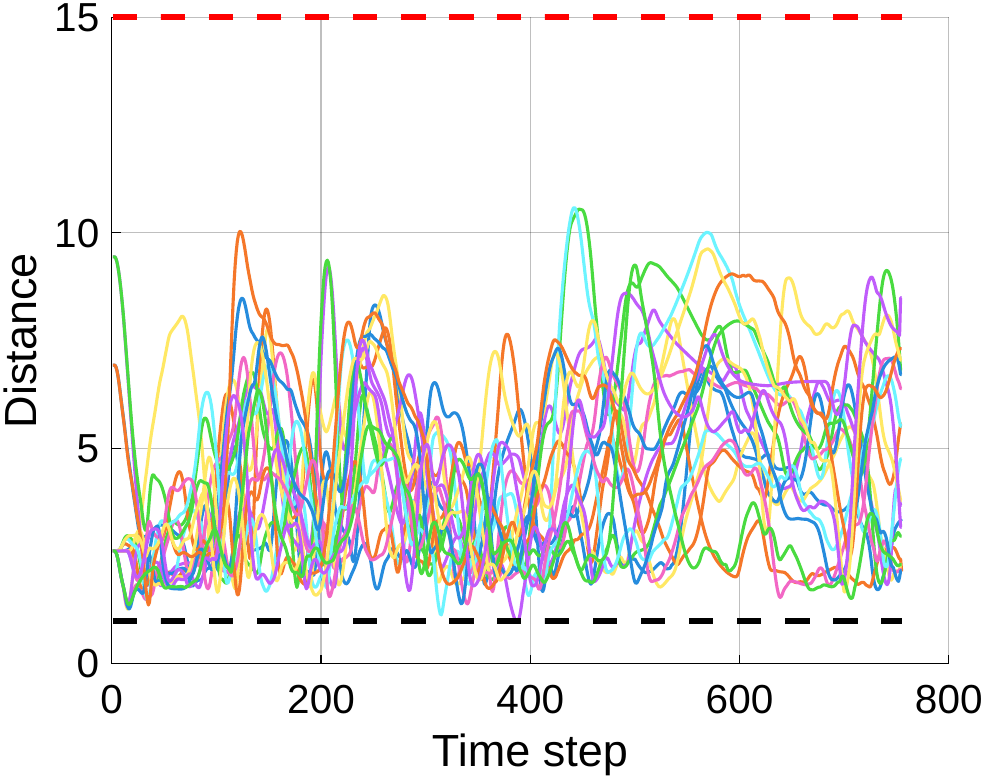}}
\caption{Twenty-agent D$^2$OC simulation: trajectories (left) and inter-agent distances (right) without/with connectivity constraints. Red dashed: communication threshold; black dashed: minimum distance.}
    \label{fig:sim}
\vspace{-1em}
\end{figure}

Two scenarios were tested: (i) no connectivity constraint and (ii) connectivity constraint with $r_{\mathrm{comm}}=15$.
Trajectories were computed in \textsc{MATLAB}, using \textsc{quadprog} for the unconstrained case and \textsc{fmincon} for the constrained cases. 
In all scenarios, box input constraints were imposed to satisfy the small-angle condition of the linearized model. 
The right column of Fig.~\ref{fig:sim} shows the corresponding inter-agent distances, verifying whether connectivity was preserved.

Without the connectivity constraint (Figs.~\ref{fig:sim}a,b), agents moved freely and dispersed widely across the domain. 
Although the communication range was \(r_{\mathrm{comm}} = 15\), the absence of constraint allowed inter-agent distances to reach as high as 200, as shown in Fig.~\ref{fig:sim}(b). 
Trajectories appeared to cover the reference samples (green) uniformly. However, since connectivity was not enforced, agents could share weight information, which represents local coverage, only when they temporarily came within range of others, resulting in event-based communication. 
Once disconnected, they lost awareness of covered regions, causing redundant revisits and inefficient exploration. 

When the connectivity-preserving constraint was applied (Figs.~\ref{fig:sim}c,d), agents maintained cohesive motion through relay links. 
Although collision avoidance is not the main focus of this work, a minimum distance of~1 was additionally enforced via a second soft constraint. 
While individual freedom was slightly reduced, continuous communication enabled synchronized weight updates and improved overall mission efficiency. 
The connectivity-preserving penalty was implemented with parameters $\kappa=750$, $\eta=0.25$, and $\gamma=0.8$ (Remark~\ref{remark:soft and tuning}), ensuring communication maintenance, whereas the second soft constraint independently kept safe separation of at least~1. 

For quantitative evaluation, the sliced Wasserstein distance (SWD) was used to measure how closely the distribution formed by agent trajectories matched the reference distribution, where a smaller value indicates better alignment. 
The unconstrained case yielded an SWD of 82.54 compared with 63.72 for the constrained case and required 1,374 versus 756 iterations for completion. 
These results demonstrate that the proposed connectivity-preserving mechanism enables faster and more accurate coverage with stronger spatial consistency among agents.

\section{Conclusion}

This letter presented a connectivity-preserving approach for multi-agent non-uniform area coverage within the Density-Driven Optimal Control (D$^2$OC) framework. By formulating the coverage problem via the Wasserstein distance as a quadratic program and incorporating communication constraints through a smooth penalty function, the proposed method ensures that agents remain within communication range while achieving effective coverage. The resulting formulation is strictly convex, allowing for globally optimal control inputs without imposing rigid formations. Simulation studies demonstrated that the approach improves both coverage quality and convergence speed compared to methods without explicit connectivity enforcement. 

Future work includes extending the framework to dynamic environments with 
time-varying communication and sensing conditions, and incorporating emergency 
protocols for restoring connectivity when links are lost. We also plan to 
conduct experimental validation of the proposed connectivity-preserving 
D$^2$OC framework using a multi-agent platform.



\bibliographystyle{ieeetr}
\bibliography{references}

\begin{thebibliography}{10}

\bibitem{oh2015survey}
K.-K. Oh, M.-H. Park, and H.-S. Ahn, ``A survey of multi-agent formation
  control,'' {\em Automatica}, vol.~53, pp.~424--440, 2015.

\bibitem{cortes2004coverage}
J.~Cortes, S.~Martinez, T.~Karatas, and F.~Bullo, ``Coverage control for mobile
  sensing networks,'' {\em IEEE Transactions on Robotics and Automation},
  vol.~20, no.~2, pp.~243--255, 2004.

\bibitem{mathew2011metrics}
G.~Mathew and I.~Mezi{\'c}, ``Metrics for ergodicity and design of ergodic
  dynamics for multi-agent systems,'' {\em Physica D: Nonlinear Phenomena},
  vol.~240, no.~4-5, pp.~432--442, 2011.

\bibitem{lee2022density}
K.~Lee and R.~Hasan~Kabir, ``Density-aware decentralised multi-agent
  exploration with energy constraint based on optimal transport theory,'' {\em
  International Journal of Systems Science}, vol.~53, no.~4, pp.~851--869,
  2022.

\bibitem{ivic2023multi}
S.~Ivi{\'c}, B.~Crnkovi{\'c}, L.~Grb{\v{c}}i{\'c}, and L.~Matlekovi{\'c},
  ``Multi-uav trajectory planning for 3d visual inspection of complex
  structures,'' {\em Automation in Construction}, vol.~147, p.~104709, 2023.

\bibitem{mathew2009spectral}
G.~Mathew and I.~Mezic, ``Spectral multiscale coverage: A uniform coverage
  algorithm for mobile sensor networks,'' in {\em Proceedings of the 48h IEEE
  Conference on Decision and Control (CDC) held jointly with 2009 28th Chinese
  Control Conference}, pp.~7872--7877, IEEE, 2009.

\bibitem{zhao2021formation}
Y.~Zhao, Y.~Hao, Q.~Wang, Q.~Wang, and G.~Chen, ``Formation of multi-agent
  systems with desired orientation: a distance-based control approach,'' {\em
  Nonlinear Dynamics}, vol.~106, no.~4, pp.~3351--3361, 2021.

\bibitem{afrazi2025density}
M.~Afrazi, S.~Seo, and K.~Lee, ``Density-driven formation control of a
  multi-agent system with an application to search-and-rescue missions,'' in
  {\em 2025 American Control Conference (ACC)}, pp.~3622--3627, IEEE, 2025.

\bibitem{mox2024opportunistic}
D.~Mox, K.~Garg, A.~Ribeiro, and V.~Kumar, ``Opportunistic communication in
  robot teams,'' in {\em 2024 IEEE International Conference on Robotics and
  Automation (ICRA)}, pp.~12090--12096, IEEE, 2024.

\bibitem{carron2023multi}
A.~Carron, D.~Saccani, L.~Fagiano, and M.~N. Zeilinger, ``Multi-agent
  distributed model predictive control with connectivity constraint,'' {\em
  IFAC-PapersOnLine}, vol.~56, no.~2, pp.~3806--3811, 2023.

\bibitem{kawajiri2021coverage}
S.~Kawajiri, K.~Hirashima, and M.~Shiraishi, ``Coverage control under
  connectivity constraints,'' in {\em Proceedings of the 20th International
  Conference on Autonomous Agents and MultiAgent Systems}, pp.~1554--1556,
  2021.

\bibitem{seo2025density}
S.~Seo and K.~Lee, ``Density-driven optimal control for efficient and
  collaborative multiagent nonuniform coverage,'' {\em IEEE Transactions on
  Systems, Man, and Cybernetics: Systems}, vol.~55, no.~12, pp.~9340--9354,
  2025.

\bibitem{villani2008optimal}
C.~Villani, {\em Optimal transport: old and new}, vol.~338.
\newblock Springer Science \& Business Media, 2008.

\bibitem{powers2015quadrotor}
C.~Powers, D.~Mellinger, and V.~Kumar, ``Quadrotor kinematics and dynamics,''
  {\em Handbook of Unmanned Aerial Vehicles}, pp.~307--328, 2015.

\end{thebibliography}

\end{document}